\newlength{\dhatheight}
\newcommand{\supp}{\mbox{\rm supp}\; }
\newcommand{\grad}{\nabla{}}
\renewcommand{\div}{\mbox{\rm div}}
\newcommand{\xb}{\bar{x}}
\newcommand{\br}{{\rm BR}}
\newcommand{\xh}{\hat{x}}
\newcommand{\ab}{\bar{a}}
\newcommand{\bb}{\bar{b}}
\newcommand{\rb}{\bar{r}}
\newcommand{\esssup}{\mbox{\rm ess\,sup}}
\newtheorem{theorem}{Theorem}
\newtheorem{lemma}{Lemma}
\newtheorem*{theorem*}{Theorem}
\begin{document}


\title{A model of cultural evolution in the context of strategic conflict}
\author{Misha Perepelitsa }

\date{\today}
\address{
mperepel@central.uh.edu\\
Department of Mathematics\\
University of Houston\\
4800 Calhoun Rd. \\
Houston, TX.}

\maketitle

\begin{abstract}
We consider a model of cultural evolution 
for a strategy selection in a population of individuals who interact in a game theoretic framework. The evolution combines individual learning of the environment (population strategy profile), reproduction, proportional to the success of the acquired knowledge, and social transmission of the knowledge to the next generation. A mean-field type equation is derived that describes the dynamics of the distribution of cultural traits, in terms of the rate of learning, the reproduction rate and population size. We establish global well-posedness of the initial-boundary value problem for this equation
and give several examples that illustrate the process of the cultural evolution for some classical games.
\end{abstract}






\begin{section}{Introduction}
Evolutionary game theory, pioneered by Maynard Smith and Price \cite{Maynard-Price} is a powerful tool  that explains dominance of some behavioral traits as  being uninvadable  by other traits in the competition for Darwinian fitness points, when  fitness is frequency dependent. A deterministic dynamic process that selects a stable behavioral traits can be described by the replicator equation, see Taylor and Jonker \cite{Taylor-Jonker}, Hofbauer et al. \cite{Hofbauer79}, Zeeman \cite{Zeem}.

The replicator equation also governs the dynamics of reinforcement learning in repeated play of a game, see Borgers and Sarin \cite{Borgers-Sarin},  Fudenberg and Levine \cite{FudenbergLevine}, Krishnedu et al. \cite{NowakKrishnedu}, Perepelitsa \cite{Perepem1}.

Learning in games is an integral part of game theory that goes back to works of Robinson \cite{Robinson} and Shapley \cite{Shapley}. One of its mainstays is fictitious play or statistical learning. The learning by fictitious play in large populations can be described by an ODE, called best-response equation, see Gilboa and Matsui \cite{GM}, Gaunersdorfer and Hofbauer \cite{GH}, Hofbauer \cite{Hofbauer} and Hofbauer and Sigmund\cite{Hofbauer-Sigmund}. The best-response equation describes  changes in the mean statistical prior about the opponent actions, and its stationary points are Nash equilibria.

In this paper we consider an evolutionary process that combines the concept of ``the survival of fittest'' from biological evolution with individual learning through fictitious play, when a state of learning  is socially transmitted to the next generation of players.
The examples of this types of processes are furnished by the cultural evolution theory.
 
Consider a cartoonish scenario of cultural evolution. Lets say there is an island populated by pedantical statisticians capable of asexual reproduction. Statisticians wonder aimlessly around an island, meeting each other occasionally for a round of a game (symmetric game with a finite set of strategies). Each carries a ledger where he/she carefully marks how many times  the opponent played a particular strategy (opponents as indistinguishable). To select a strategy, each of them uses the ``sacred book of rules'' (best response in a game) that prescribes what to do given the current count from his/her ledger. The book has a dual purpose of  settling the outcome of each play of the game (payoffs), and the players collect certain amount of fitness points from each play.  
From time to time statisticians reproduce at the rate proportional to their accumulated fitness. When that happens, they solemnly pass an exact copy of their ledgers to each of the offspring, who carry on with it in the same manner.

We say that this model is a form of cultural evolution because it is characterized by social transmission of traits (inheritance of knowledge) and individual learning as adaptation mechanisms,  see Hoppitt and  Laland \cite{Hoppitt-Laland}, Richerson and Boyd \cite{Richerson-Boyd}. Note, that in this case, social transmission and learning change population strategy profile (environment), which determines, in its turn, the degree of  success of a cultural trait, rendering  the problem nonlinear.

The main parameters of the problem are the rate of learning, the rate of reproduction
and population size. Additional information might be needed to completely specify the problem. For example, if the island is not big, we may assume that the frequency with which inhabitants meet and play the game increases with the population size. Another scenario is an infinite island which allows inhabitants to spread, no matter how many of them are there, so that the interaction frequency is constant. 

The goal of the paper is to develop a mathematical  model that 
takes as input the initial distribution of cultural traits in a population, the above mentioned rates of learning and reproduction, and outputs the distribution of cultural traits at any moment $t$ in future.  As we will see from the examples of section \ref{examples}, it is essential for an accurate description of the dynamics that the model specifies the whole distribution of traits and not just some statistical averages, such as the mean and the variance.    The model is derived as a mean-field approximation of the distribution density of a Markov process describing the interaction of agents. The equation is of kinetic type with  non-linear kinetic velocities. Due to the discontinuities of the best-response function, solutions of this equation are intrinsically weak. Our main result, stated in section \ref{model}, establishes global well-posedeness of the initial-boundary value problem for this equation.

In section \ref{examples} we discuss two examples that illustrate the dynamics of  this model of cultural evolution.
In the first, we consider Hawk-Dove-Retaliator game with two evolutionary stable strategies: $\frac{1}{2}H +\frac{1}{2}D$ and $R.$ 
These two strategies are the only  asymptotically stable points for the replicator and best-response equations. The phase, however, are different, with the basin of attraction for $\frac{1}{2}H +\frac{1}{2}D,$ for the replicator equation, being strictly included in the basin of attraction  for the best-response dynamics. As a result, there are initial conditions for the distribution of cultural traits which evolve to Retaliator when the rate of learning is low, but  proceed to  $\frac{1}{2}H +\frac{1}{2}D$ when learning rate is increased. It also means that if the biological evolution proceeds to the mix of Hawk and Dove, it can not be averted to anything else by learning.  Another  interesting property of this process is a sharp change in the environment (population strategy profile) when a subpopulation continuously  transitions from one decision polygon to another.

In the second example, of Rock-Paper-Scissors game, we show how exponentially growing heterogeneous population can lock the cultural evolution in a suboptimal pure strategy, in contrast to both, the dynamics of the replicator and best-response equations.
 
In general, determining asymptotic state for this type of evolution for an arbitrary game   is problematic due to complicated dynamics and  absence of entropy functionals. It can be done in some cases, at least partially,  as in the model with zero reproduction rate. For that model, we show that if two statistical averages, the mean prior and mean best-response converge to some values (not necessarily the same), then the prior of every agent in the population converge to  a Nash equilibrium of the game.

\end{section}

\begin{section}{Model}
\label{model}
We consider a series of plays of a symmetric 2-player game between randomly selected agents in a large population. There are $d$ strategies available to agents and the payoffs are given by matrix $A=\{a_{ij}\}_{i,j=1}^d,$  which we assume to have non-negative entries. The game defines multi-valued best response function $\br(p)\,:\,S_{d-1}\to \mathcal{P}(S_{d-1}),$ where $S_{d-1}$ is $d-1$ dimensional simplex. We will also use its single valued representative $b(p) \in \br(p/\sum_i p_i).$ We refer the reader to Appendix section \ref{best:response} for details.
We start with the case when there is no reproduction  
and the population stays at the same level $N$.

We will record the change of the state of agents that occur at discrete epochs, labeled by $t.$
The state of agent $i$ at epoch $t$ is a $d+1$ dimensional vector
\[
X^t_i{}={}\left(P^t_i,S^t_i\right),
\]
where
\[
P^t_i {}={}(p^t_{i,1},...,p^t_{i,d}),
\]
is a vector of learning priors (unscaled) and $S^t_i$ is an averaged, accumulated fitness.

An interaction is a round of the game between to random agents, say $i$ and $j,$  who play according to their priors $P^t_i$ and $P^t_j,$ that is, using their best response strategies. 
Based on that, they earn fitness points and update the learning priors. To describe the  update rule we will use the following  parameters: $h$ -- the characteristic learning increment, $\mu h$ -- characteristic fitness increment, and $\delta$ -- time increment.
Thus we assuming the learning and fitness increments are of the same order, but not necessarily equal. The rule takes the form
\begin{eqnarray*}
P^{t+\delta}_i &=& P^t_i + hb(P^t_j)\\
 P^{t+\delta}_j &=& P^t_j + hb(P^t_i)\\
 S^{t+\delta}_i &=& (1-\mu h)S^t_i {}+{}\mu h a(b(P^t_i),b(P^t_j))\\
 S^{t+\delta}_j &=& (1-\mu h)S^t_j {}+{}\mu h a(b(P^t_j),b(P^t_i))
\end{eqnarray*}
where $a(b(P^t_i),b(P^t_j)){}={}\sum_{k,l}a_{ij}b(P^t_{i,k})b(P^t_{j,l})$ is 
the fitness earned by agent $i.$
In this formulas the fitness is averaged over the history of payoffs, so that it can not grow without a bound. One can think of $\mu$ as being a recency parameter. Large values of $\mu$ put more weight on more recent payoffs.

Our goal here is to derive an approximate equation for $f(p,s,t)$ -- the density of the distribution of agents over the space of learning priors and fitness $(p,s)\in \mathbb{R}^{d}_+\times\mathbb{R}_+.$ In the following derivation we use the convention that
\[
x_i = (p_i,s_i),\quad \xb {}={}(x_1,...,x_N)\in (\mathbb{R}^d_+\times\mathbb{R}_+)^N,
\]
where $\xb$ parametrized the state of the whole population.

Let $w(\xb,t)$ be the density of the distribution of priors and fitness for the whole population. This function implicitly  depends on parameters such as $h_1,h_2,$ and $\mu,$ but we suppress them from notation for convenience of presentation.   The update rule can be expressed as a moment relation with a test function $\phi,$
\[
\int \phi(\xb)w(\xb,t+\delta)\,d\xb {}={}\sum_{i\not=j}(N(N-1))^{-1}\int \phi(\xb)\Big|_{x_i=\xh_i \atop  x_j=\xh_j}w(\xb,t)\,d\xb,
\]
where $\xh_i{}={}(p_i + hb(p_j),\,  (1-\mu h)s_i {}+{}\mu h a(b(p_i),b(p_j))),$
ans symmetrically for $\xh_j.$ The last equation can be written as
\begin{equation}
\label{eq:int1}
\int \phi(\xb)[w(\xb,t+\delta)-w(\xb,t)]\,d\xb {}={}\sum_{i\not=j}(N(N-1))^{-1}\int [\phi(\xb)\Big|_{x_i=\xh_i \atop  x_j=\xh_j}- \phi(\xb)]w(\xb,t)\,d\xb.
\end{equation}

Function $f(x,t),$ where $x=(p,s)$ is related to the multi-agent distribution $w(\xb,t)$ through the rule:
\[
f(x,t){}={}\sum_k N^{-1}\int w(\xb)\big|_{x_k=x}\,d\xb_k,\quad x\in\mathbb{R}^d_+\times\mathbb{R}_+,
\]
where $\xb_k$ is a $(d+1)(N-1)$ dimensional vector of all coordinates, $x_1,...,x_N,$  excluding $x_k.$ This is one-particle distribution function. In the formulas to follow we need to use two-particle distribution function 
\[
g(x,y,t){}={}\sum_{i\not=j} (N(N-1))^{-1}\int w(\xb)\big|_{x_i=x,\, \\x_j=y}\,d\xb_{ij},
\]
where $\xb_{ij}$ is the $(d+1)(N-2)$ dimensional vector of all coordinated excluding $x_i$ and $x_j.$

Function $g$ is symmetric in $(x,y)$ and is related to $f$ by the formulas
\[
f(x,t){}={}\int g(x,y,t)\,dx{}={}\int g(x,y,t)\,dy.
\]
The moments of function $f$ and $g$ are computed from the  moments of $w:$
\[
\int \psi(x)f(x,t)\,dx{}={}\sum_k N^{-1}\int \psi(x_k)w(\xb)\,d\xb,
\]
and 
\[
\int \omega(x,y)g(x,y,t)\,dxdy{}={}\sum_{i\not=j} (N(N-1))^{-1}\int \omega(x_i,x_j)w(\xb)\,d\xb.
\]
Now, we use \eqref{eq:int1} to obtain an integral equation of the change of function $f.$ For that select $\phi(\xb){}={}\psi(x_k),$ sum over $k$ and take average. We get
\begin{multline}
\int\psi(x)[f(x,t+\delta)-f(x,t)]\,dx\\
{}={}N^{-1}\sum_{k}\sum_{i\not=j} (N(N-1))^{-1}
\int[\psi(x_k)\Big|_{x_i=\xh_i \atop  x_j=\xh_j}-\psi(x_k)]w(\xb,t)\,d\xb \\
{}={} N^{-1}\sum_{i\not=j} (N(N-1))^{-1}\left(\int[\psi(\xh_i)-\psi(x_i)]w(\xb,t)\,d\xb\right.\\
\left. 
+\int[\psi(\xh_j)-\psi(x_j)]w(\xb,t)\,d\xb\right)\\
{}={}\frac{2}{N}\iint[\psi(\xh) - \psi(x)]g(x,y,t)\,dxdy,
\end{multline}
where $x=(p,s),\,y=(p',s')$ and
\[
\xh{}={}(p + hb(p'),\,  (1-\mu h)s {}+{}\mu h a(b(p),b(p'))).
\]

To proceed to, we make an assumption of statistical independence of the states 
of two randomly selected agents:
\[
g(x,y,t){}={}f(x,t)f(y,t).
\]
The plausibility of this condition is partially justified if the population is large, so that same agents are rarely matched together,  and the information about the interaction is not shared between other agents.

Then, expanding $\psi(\xh)$ in Taylor series and integrating by parts, we obtain
\begin{multline}
\int\psi(x)[f(x,t+\delta)-f(x,t)]\,dx\\{}={}\frac{2}{N}\iint \psi(p,s)\div_{p,s}\left(
(hb(p'),\mu hs-\mu ha(b(p),b(p'))) f(x,t)\right)f(y,t)\,dydx \\
{}+{}O(h^2)\\
{}={}
\frac{2}{N}\int \psi(p,s)\div_{p,s}\left(
(h\bb(t),\mu hs-\mu h\ab(b(p),t)) f(x,t)\right)\,dx \\
{}+{}O(h^2),
\end{multline}
with the mean best response
\begin{equation}
\label{bb}
\bb(t) = \iint b(p) f(p,s,t)\,dpds,
\end{equation}
and the mean fitness for using strategy $b(p):$
\begin{equation}
\label{ab}
\ab(b(p),t) = \iint a(b(p),b(p'))f(p',s',t)\,dp'ds'{}={}\sum_{i,j} a_{ij}b_i(p)\bb_j(t).
\end{equation}
Dividing equation by $\delta$ and ignoring higher order terms we arrive at Fokker-Planck equation for density $f(p,s,t):$
\begin{equation}
\label{FK:1}
\partial_t f{}+{}\frac{2h}{N\delta}\div_p\left(\bb(t)f\right) {}+{}
\frac{2h\mu}{N\delta}\partial_s\left((\ab(b(p),t)-s)f\right){}={}0.
\end{equation}

In passing from a discrete to continuous time model we are assuming $\delta,h$ are small, $N$ is large, so that ratios 
\begin{equation}
\label{alphas}
\alpha_p{}={}\frac{2h}{N\delta},\quad \alpha_s{}={}\frac{2\mu h}{N\delta}
\end{equation}
are of finite order. Note that $(N\delta)^{-1}$ can be interpreted as a number of interactions per agent, per unit of time. We're assuming that this number is large and inversely proportional to the characteristic learning and fitness increment $h.$

Now we extend the model to variable populations, by allowing agents to reproduce at the rate proportional their level of fitness. At this point we proceeding heuristically, leaving out the details of the derivation. 

With reproduction, the Fokker-Planck equation must be  appended by a source term
proportional to $(s-\rb(t))f(p,s,t)$ on the right-hand side of \eqref{FK:1}, where $\rb(t)$ is mean population fitness
\begin{equation}
\label{rb}
\rb(t){}={}\sum_{i,j}a_{ij}\bb_i(t)\bb_j(t).
\end{equation}
Mean population size  $N=N(t),$ which is determined from the equation
\begin{equation}
\label{Pop}
\frac{1}{N}\frac{d N}{dt}{}={}\alpha \iint sf(p,s,t)\,dpds,
\end{equation}
where $\alpha$ is reproduction rate. Moreover rates $\alpha_p,$ $\alpha_s$ are variable and depend on $N=N(t).$
The final  model reads:
\begin{equation}
\label{FK:2}
\partial_t f{}+{}\alpha_p\div_p\left(\bb(t)f\right) {}+{}
\alpha_s\partial_s\left((\ab(b(p),t)-s)f\right){}={}\alpha (s-\rb(t))f,
\end{equation}
with $\alpha_p,\alpha_s,\bb(t),\ab(b(p),t)$ and $\rb(t)$ given by \eqref{alphas}, \eqref{bb}, \eqref{ab}, and \eqref{rb}, respectively. Note that equations \eqref{Pop} and \eqref{FK:2} are coupled through formulas \eqref{alphas}.

\subsection{Singular limit of recency parameter $\mu.$} In the reproduction scenario described by \eqref{FK:2}, children acquire not only knowledge $p$ of parents but also their averaged, accumulated fitness $s.$  Hypothetically, this might be a valid assumption in some situations, however, it seems more relevant to consider the case that it is only knowledge $p$ that eventually determines the fitness of offspring. This can easily be achieved in the framework of models \eqref{alphas}-\eqref{FK:2} be taking the limit of $\mu\to\infty$ ($\alpha_s\to \infty$), which overweights the stimulus obtained from  recent encounters. 
For the derivation of the new model we proceed informally. Dividing equation \eqref{FK:2} by $\alpha_s$ and passing to the limit, we get 
\[
\partial_s ( (\ab(b(p),t)-s)f){}={}0.
\]
Since $f$ is non-negative, this equation can be true only if for all $p\in\mathbb{R}^d_+,$ and $t>0,$ $f$ is a delta-function concentrated on value $\ab(b(p),t):$
\[
f(p,s,t) = \delta(s-\ab(b(p),t)).
\]
That is,  fitness equals to the expected payoff for an agent using strategy $b(p)$ against the population strategy profile $\bb(t)$:
\[
s = \ab(b(p),t){}={}\sum_{ij}a_{ij}b_i(p)\bb_j(t).
\]
Now, the dimension of the problem can be reduced, as we can integrate \eqref{FK:2} in $s,$ and find an equation for moment $\int_{-\infty}^\infty f(p,s,t)\,ds,$ which, with slight abuse of notation, we still call $f(p,t).$  The equation reads:
\begin{eqnarray}
\label{FK:3}
\partial_t f{}+{}\alpha_p\div_p \bb(t)f {}&=&{}\alpha (\ab(b(p),t) - \rb(t))f \notag\\
 &=& \alpha \left(\sum_{ij}a_{ij}b_i(p)\bb_j(t) {}-{}\sum_{ij}a_{ij}\bb_i(t)\bb_j(t)\right)f.
\end{eqnarray}
This is the equation of our main interest, for which we will establish global well-posedness. Before we switch to the mathematical analysis, we mention a special case with zero reproduction $\alpha=0.$

To complete the mathematical setup for equations \eqref{FK:3} and \eqref{SL:1} it remains to add the initial conditions for the population size $N(0)=N_0,$ for the density
\begin{equation}
\label{IC}
f(p,0){}={}f_0(p),\quad p\in\mathbb{R}^d_+.
\end{equation}
and boundary conditions (zero influx of probability):
\begin{equation}
\label{BC}
f(p,t){}={}0,\quad p\in\partial\mathbb{R}^d_+,\,t\geq0.
\end{equation}
Note that  velocity vector $\bb(t)$ is always directed into $\mathbb{R}^d_+,$ and the problem is not over-determined.

\subsection{Fictitious play in large populations} The model becomes particularly simple:
\begin{equation}
\label{SL:1}
\partial_t {f} {}+{}\alpha_p\div(\bb(t){f}){}={}0,
\end{equation}
with the mean best response 
\begin{equation}
\label{bb}
\bb(t){}={}\int b(p){f}(p,t)\,dp.
\end{equation}
Using equation \eqref{SL:1} we can compute the equation for the mean empirical frequencies vector $P(t):$
\begin{equation}
\label{P}
\frac{dP_i}{dt}{}={}\int \frac{1}{\sum_j p_j}\left(\bb_i(t) - \frac{p_i}{\sum_j p_j}\right)\tilde{f}(p,t)\,dp,\quad i=1..d,
\end{equation}
since $\sum_j \bb_j(t){}={}1.$  If one postulates that all agents have the same, or approximately the same, priors 
\begin{equation}
\label{H:delta}
p(t)=(P_1(t),..,P_d(t)),
\end{equation}
then the above equation reduces  to a variant of  the best response  dynamics equation:
\begin{equation}
\label{BRD}
\frac{dP_i}{dt}{}={}\frac{1}{\sum_j P_j(t)}\left( \bb_i(P) - P_i  \right),\quad i=1..d.
\end{equation}
Notice, also, the  positive factor on the right-hand side of the equation. For a learning processes in which priors become large,  the learning rate slows down. 

\subsection{Relation to the replicator equation}
With zero learning rate $\alpha_p=0$ model \eqref{FK:3} is simply the replicator equation written in terms of the distribution function $f(p,t).$ Indeed, in this case each agent uses a fixed strategy $b(p),$ so that the population is split into at most $d$ groups, each using a particular strategy, and each reproducing at the rate proportional to the averaged fitness 
obtained from interacting with whole population. Formally, one obtains the system of replicator equations by integrating  \eqref{FK:3} over sets $\{p\,:\, b(p) = e_k\},$ $k=1..d.$

\subsection{Existence of weak solutions}
In this section we establish our main result, theorem \ref{th:1}.
Let $\Omega=\mathbb{R}^d_+,$ and $C^1_0(\Omega)$ be a space of continuously differentiable functions with compact support in $\Omega.$ We adopt standard notation for $L^p(\Omega)$ spaces and the space of functions of locally bounded variation   $BV_{loc}(\Omega).$ The latter consists of all measurable and locally integrable functions $f$ such that for any ball $B_r,$
\[
\|f\|_{TV(B_r\cap\Omega)}{}={}\sup \left\{
\int_{B_r\cap\Omega}f\div \psi\,dp\,:\, \psi \in C^1_0(B_r\cap \Omega),\, \sup_p|\psi|\leq1
\right\}<+\infty.
\]
For such functions, the distributional derivative $\partial_{p_i} f,$ $i=1..d,$ is a signed Radon measure. One can find the information on these spaces and the results from functional analysis that we use below, for example, in a  book by Brezis \cite{Brezis}.

\begin{theorem}
\label{th:1}
Let $f_0\in C^1_0(\Omega)$ be a non-negative function with unit mass. There is a unique weak solution $f$ of \eqref{FK:3}, \eqref{IC}, \eqref{BC} such that 
\[
f\in {}C([0,T];L^1(\Omega))\cap L^\infty([0,T];BV(B_r\cap\Omega)),\quad \forall r,T>0.
\]
For any $t>0,$ $f(p,t)\geq 0, $ a.e. in $\Omega$ and $\int f(p,t)\,dp{}={}1.$
\end{theorem}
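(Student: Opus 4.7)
The plan is to exploit the fact that the drift $\alpha_p \bar{b}(t)$ in \eqref{FK:3} depends only on time, not on $p$, so the characteristics are parallel translations and the equation reduces to a linear transport--reaction problem once the mean quantities $\bar{b}(t)$ and $\bar{r}(t)$ are known. First I would freeze $\bar{b}\in L^\infty([0,T];S_{d-1})$ and set $c(t)=\alpha_p\int_0^t\bar{b}(\tau)\,d\tau$. Along characteristics $p(\tau)=p-c(t)+c(\tau)$, the equation integrates explicitly to
\begin{equation*}
f(p,t)=f_0\bigl(p-c(t)\bigr)\,\exp\!\left(\alpha\int_0^t\bigl(\bar{a}(b(p-c(t)+c(\tau)),\tau)-\bar{r}(\tau)\bigr)d\tau\right),
\end{equation*}
which is well-defined as long as the characteristic through $(p,t)$ does not leave $\Omega$; the boundary condition \eqref{BC} is enforced by extending $f_0$ by zero outside $\Omega$, which is consistent because $\bar{b}(t)\in S_{d-1}$ points into $\Omega$. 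From the explicit formula, $f\geq 0$ and $f(\cdot,t)\in BV_{loc}$ for every $t$, since $f_0\in C^1_0$ and $b(p)$ is piecewise constant on finitely many decision polygons, so the exponential factor is a bounded function with locally finite total variation.

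Next I would set up a fixed-point map $\mathcal{T}$ on the space $\mathcal{X}_T=C([0,T];\mathbb{R}^d)$ (or $L^\infty$) by assigning to an input $\bar{b}$ the output
\begin{equation*}
(\mathcal{T}\bar{b})(t)=\int_\Omega b(p)f(p,t)\,dp,
\end{equation*}
where $f$ is produced from $\bar{b}$ via the characteristic formula above and $\bar{r}(t)$ is computed from $\bar{b}$ through \eqref{rb}. Mass conservation of the linear subproblem follows by integrating the equation over $\Omega$ and using $\int \bar{a}(b(p),t)f\,dp=\sum_{ij}a_{ij}\bar{b}_j(t)\int b_i(p)f\,dp$, which equals $\bar{r}(t)\int f\,dp$ exactly when $\bar{b}=\mathcal{T}\bar{b}$; for general input the mass of $f$ is at worst exponentially bounded on $[0,T]$, which is enough to make $\mathcal{T}$ well-defined into a bounded subset of $\mathcal{X}_T$.

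The main technical obstacle is showing that $\mathcal{T}$ is Lipschitz (and in fact contractive for $T$ small) despite the discontinuities of $b(p)$. The key observation is that $b$ enters the output only through integrals against $f$, and $f$ inherits the $L^1$ translation-continuity of $f_0\in C^1_0$. Concretely, if $\bar{b}_1,\bar{b}_2$ produce $c_1(t),c_2(t)$ and densities $f_1,f_2$, then
\begin{equation*}
\bigl|(\mathcal{T}\bar{b}_1)(t)-(\mathcal{T}\bar{b}_2)(t)\bigr|\leq \|b\|_\infty\,\|f_1(\cdot,t)-f_2(\cdot,t)\|_{L^1(\Omega)},
\end{equation*}
and the right-hand side is controlled, via the characteristic formula, by $|c_1(t)-c_2(t)|\,\|\nabla f_0\|_{L^1}$ plus a similar contribution from the exponential factor, which is Lipschitz in $\bar{b}$ because it depends on $\bar{b}$ through bounded $L^1$-in-$\tau$ integrands multiplied against the bounded BV function $b\circ(\text{shift})$. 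This yields $\|\mathcal{T}\bar{b}_1-\mathcal{T}\bar{b}_2\|_{\mathcal{X}_T}\leq C(T)\,\|\bar{b}_1-\bar{b}_2\|_{\mathcal{X}_T}$ with $C(T)\to 0$ as $T\to 0$, giving a unique local fixed point by Banach contraction.

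Finally, I would extend to arbitrary $T>0$ by noting that all a priori bounds are global: the fixed point has unit mass by the computation above, $\bar{b}$ stays in the compact simplex $S_{d-1}$, $\bar{r}$ is bounded by $\|A\|$, so the exponential factor in the characteristic formula is controlled uniformly on $[0,T]$ for any $T$, and the local contraction time $T_0$ can be chosen depending only on these bounds. Iterating produces a global solution in $C([0,T];L^1(\Omega))$, and the explicit representation gives the $L^\infty([0,T];BV(B_r\cap\Omega))$ regularity. Uniqueness in the full nonlinear problem follows from the same Lipschitz estimate applied on successive short intervals.
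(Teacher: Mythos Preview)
Your approach is correct and reaches the same conclusion, but it is genuinely different from the paper's. The paper sets up the fixed point on the convex set of nonnegative unit-mass densities $g\in C([0,T];L^1(\Omega))$, proves uniform $BV$-in-$p$ and Lipschitz-in-$t$ bounds for $\mathcal{L}(g)$ from the characteristic formula, and then uses Kolmogorov--Riesz--Fr\'echet compactness together with the \emph{Schauder} fixed-point theorem to obtain existence; uniqueness is proved separately by a Gronwall estimate on $\|f_1-f_2\|_{L^1}$. You instead place the fixed point on the finite-dimensional-valued function $\bar b\in C([0,T];\mathbb{R}^d)$ and argue by \emph{Banach contraction} on short intervals, iterated globally. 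Both routes rest on the same two structural inputs---the characteristics are pure translations and $b(p)$ has locally bounded variation---but exploit them differently: the paper uses the $BV$ bound for compactness, while you use the $BV$ translation inequality $\int_{B_r}|b(p+h)-b(p)|\,dp\le C(r)|h|$ to extract a Lipschitz constant. Your route is more constructive and yields existence and uniqueness in one stroke (indeed, the paper's uniqueness argument is essentially your Lipschitz estimate). One bookkeeping point you should make explicit: since the intermediate $f$'s need not have unit mass, $\mathcal{T}$ does not map $C([0,T];S_{d-1})$ to itself, so the contraction must be run on a closed ball in $C([0,T];\mathbb{R}^d)$ (with $f_0$ extended by zero so that characteristics leaving $\Omega$ cause no harm) and self-mapping checked for $T$ small; mass conservation then follows a posteriori at the fixed point from the identity $\int\bar a(b(p),t)f\,dp=\bar r(t)$.
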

\begin{proof}
From the definition of function $b(p)$ and properties of $\br(p)$ it follows that for any ball $B_r,$  $b(p)$ has finite total variation on $B_r\cap \Omega,$ and there is $C=C(r),$ but not depending on the center of the ball, such that
\begin{equation}
\label{b:tv}
\|b\|_{TV(B_r\cap\Omega)}{}\leq{} C.
\end{equation}
Equation \eqref{FK:3} can be written in non-conservative form as
\begin{equation}
\partial_t f{}+{}\bb(t)\grad f{}={}\left(\sum  a_{ij}(b_i(p)\bb_j(t)-\bb_i(t)\bb_j(t)) \right)f,
\end{equation}
where for simplicity we set $\alpha=1.$ Given a continuous function $\bb(t)$ we solve this equation by the method of characteristics. For a mapping $X^t\,:\,\mathbb{R}^d\to \mathbb{R}^d,$ defined as
\[
X^t(p) = p + \int_0^t\bb(\tau)\,d\tau,
\]
$f$ is expressed through the formula
\begin{equation*}
f(X^t(p),t){}={}f_0(p)\exp\left\{
\int_0^t\sum a_{ij}[b_i(X^\tau(p))\bb_j(\tau) - \bb_i(\tau)\bb_j(\tau)]\,d\tau
\right\},
\end{equation*}
or as
\begin{multline}
\label{f:formula}
f(p,t){}={}f_0(p-\int_0^t\bb(\tau)\,d\tau)\\
\times\exp\left\{
\int_0^t\sum a_{ij}[b_i(p - \int_\tau^t\bb_i(s)\,ds)\bb_j(\tau) - \bb_i(\tau)\bb_j(\tau)]\,d\tau
\right\}.
\end{multline}

Let $g\in C([0,T];L^1(\Omega))$ be a non-negative function such that $g(p,0)=f_0(p),$
and $\int g(p,t)\,dp=1,$ for all $t\in[0,T].$ We denote this subset of functions as $K.$ It is a closed, convex subset of $C([0,T];L^1(\Omega)).$ Let
\[
\bb_g(t){}={}\int b(p)g(p,t)\,dp,
\]
and define map $f=\mathcal{L}(g)$ by evaluating \eqref{f:formula} with $\bb=\bb_g.$
Notice that due to assumptions on $g,$ $\sup_t|\bb_g(t)|\leq 1.$ It follows that
\[
\sup_{p,t}f(p,t)\leq{}e^{CT}\sup_p f_0(p),
\]
for some $C>0$ independent of $g,$ and $\int f(p,t)\,dp{}={}1.$ Moreover,  the following lemma  holds
\begin{lemma} For any $r>0,$
$f\in L^\infty((0,T);BV(B_r\cap\Omega)),$ and there is $C=C(r,T),$ independent of $g$ such that 
\[
\esssup_t\|f(\cdot,t)\|_{TV(B_R\cap\Omega)}\leq C(r,T).
\]
\end{lemma}
\begin{proof}
Recall that $b(p)$ is a function of finite total variation that verifies estimate \eqref{b:tv}.
Differentiating  
\eqref{f:formula} in $p_k,$ and using the chain rule we  find that for any ball $B_r,$
\begin{multline}
\int_{B_r}|\partial_{p_k}f|\,dp{}\leq{} C(T)\int_\Omega|\partial_{p_k} f_0|\,dp\\
{}+{} C(T)\sup f_0 \int_0^t \int_{B_r-\int_\tau^t \bb_g(s)\,ds} \sum_i |\partial_{p_k}b_i(p)|\,dpd\tau
\\
{}\leq{}C(T)\int_\Omega|\partial_{p_k} f_0|\,dp
{}+{} C(r,T)\sup f_0{}\leq{}C(r,T),
\end{multline}
where $|\partial_{p_k}b_i(p)|$ is a Borel measure.
\end{proof}
Using the argument of the last lemma one easily verifies that $f$ is Lipschitz continuous in time with values in $L^1(\Omega):$
\begin{lemma}
Let $t$ and $t+\delta\in [0,T].$ Then, there is $C=C(T),$  independent of $g,$ such that
\begin{equation}
\int |f(p,t+\delta)-f(p,t)|\,dp {}\leq{} \|f_0\|_{C^1(\Omega)}C\delta. 
\end{equation}
\end{lemma}

From the properties of $f=\mathcal{L}(g)$ that we have just established we see that $\mathcal{L}$ maps $K$ into itself. In addition, we now show that
\begin{lemma} $\mathcal{L}[K]$ is pre-compact in $C([0,T];L^1(\Omega)).$
\end{lemma}
\begin{proof}
Indeed, since $f$ has bounded total variation in $p,$ we know that
\[
\sup_{t\in[0,T]}\int |f(p+h,t)-f(p,t)|\,dp{}\leq{}C(T)h.
\]
The support of functions $f(\cdot,t)$ for all different $t$'s and $g$'s is contained in some fixed ball $B_r$ because $X^t$ is an uniform translation with a continuous vector $\int_0^t\bb_g(\tau)\,d\tau.$ By Kolmogorov-Riesz-Frechet theorem, for all $t\in[0,T],$ set
\[
\left\{
\mathcal{L}(g)\right\}_{g\in K}
\]
is pre-compact in $L^1(\Omega).$ Using Lipschitz continuity in time, this also implies
that $\{\mathcal{L}\}_{g\in K}$ is pre-compact in $C([0,T];L^1(\Omega)).$
\end{proof}
Thus, $\mathcal{L}$ is a compact mapping from $K$ into itself.
By Schauder fixed point theorem, there is a fixed point $f=\mathcal{L}(f)$ in $K\subset C([0,T];L^1(\Omega)).$ Clearly, it verifies all estimates that we have derived. Moreover, it can be shown that $f$ is a weak solution of pde \eqref{FK:3}.

Uniqueness of solutions follows from a stronger property, stability estimate. Let $f_1,\,f_2$ be two solutions of \eqref{FK:3}--\eqref{BC} with initial conditions $f_{0,1}$ and $f_{0,2}.$ Such solutions verify the formula \eqref{f:formula}, from which we find that
\begin{multline*}
\int |f_1(p,t)-f_2(p,t)|\,dp{}\leq{}C(T)\int |f_{0,1}(p,t)-f_{0,2}(p,t)|\,dp\\
{}+{}C\int_0^t\int |f_1(p,\tau)-f_2(p,\tau)|\,dpd\tau.
\end{multline*}
Thus, according to Gronwall's inequality
\[
\int |f_1(p,t)-f_2(p,t)|\,dp{}\leq{}C(T)\int |f_{0,1}(p,t)-f_{0,2}(p,t)|\,dp.
\]
\end{proof}

Now we collect information on the support of solutions of \eqref{FK:3} that will  be used in the proof of theorem \ref{th:2}.

\begin{lemma}
\label{lemma:support}
Suppose that $\supp f_0\subset Interior(\Omega).$
Then, for any $t>0,$
\begin{enumerate}
\item[a.] $\supp f(\cdot,t)\subset Interior(\Omega);$
\item[b.] for any $p\in\Omega,$
\[
|p+\int_0^t\bb(\tau)\,d\tau|\geq t/d^2;
\]
\item[c.] if $\supp f_0\subset B_r(p_0),$ for some $r$ and $p_0\in\Omega,$ then
\[
\supp f(\cdot,t) \subset B_r\left(p_0+\int_0^t\bb(\tau)\,d\tau\right).
\]
\end{enumerate}
\end{lemma}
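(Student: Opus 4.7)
The plan is to read off all three claims from the explicit representation formula \eqref{f:formula} established in the proof of Theorem \ref{th:1}. Since the fixed point $f$ solves $\partial_t f + \bb(t)\cdot\grad f = (\ldots)f$ along characteristics $X^t(p)=p+\int_0^t\bb(\tau)\,d\tau$, the exponential factor in \eqref{f:formula} is everywhere positive, so $f(p,t)\neq 0$ if and only if $p-\int_0^t\bb(\tau)\,d\tau\in\supp f_0$. In other words,
\[
\supp f(\cdot,t){}={}\supp f_0 {}+{}\int_0^t\bb(\tau)\,d\tau.
\]
Claim (c) then follows immediately: translating a ball $B_r(p_0)$ by $\int_0^t\bb(\tau)\,d\tau$ produces $B_r(p_0+\int_0^t\bb(\tau)\,d\tau)$.

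The remaining two items rely on two elementary properties of the shift vector $V(t):=\int_0^t\bb(\tau)\,d\tau$. First, since each $\bb(\tau)=\int b(p)f(p,\tau)\,dp$ is a convex combination of points of the simplex $S_{d-1}$, every component $\bb_i(\tau)$ is nonnegative, hence so is every component of $V(t)$. Together with $p_i\geq 0$ for $p\in\Omega$, this gives (a): if $\supp f_0\subset\{p:p_i>0\text{ for all }i\}$, then for any $q\in\supp f_0$ each coordinate of $q+V(t)$ is strictly positive, so the shifted support still lies in the interior of $\Omega$.

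Second, the simplex constraint $\sum_i \bb_i(\tau)=1$ (which follows from $b(p)\in S_{d-1}$ and $\int f(p,\tau)\,dp=1$) yields $\sum_i V_i(t)=t$. For (b) I take any $p\in\Omega$ and write $q:=p+V(t)$; all coordinates of $q$ are nonnegative and sum to at least $t$, so $|q|_1\geq t$. The standard inequality $|q|_2\geq |q|_1/\sqrt{d}\geq t/\sqrt d\geq t/d^2$ (valid for all $d\geq 1$) gives the stated bound.

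There is no real obstacle: once the representation formula is in hand, each item is a one-line consequence of the fact that $\bb$ takes values in the probability simplex. The only place that requires a moment's care is confirming the simplex bound $\sum_i\bb_i=1$, which uses mass conservation $\int f(\cdot,\tau)\,dp=1$ established in Theorem \ref{th:1}, and then choosing any convenient norm comparison for (b).
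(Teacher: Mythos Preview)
Your proof is correct and follows the same strategy as the paper: all three parts are read off from the representation formula \eqref{f:formula} together with the fact that $\bb(\tau)\in S_{d-1}$. Parts (a) and (c) match the paper's arguments essentially verbatim.

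For part (b) there is a small but genuine difference. The paper uses a pigeonhole argument: since $\sum_i\bb_i(\tau)=1$, at each time some coordinate is at least $1/d$, so some fixed index $i_0$ has $\bb_{i_0}(\tau)\ge 1/d$ on a time set of measure at least $t/d$, giving one coordinate of $V(t)$ of size at least $t/d^2$. You instead integrate the simplex identity directly to get $|V(t)|_1=t$, then apply $|q|_2\ge |q|_1/\sqrt d$. Your route is shorter and even yields the sharper bound $|p+V(t)|\ge t/\sqrt d$ before you throw it away to match the stated $t/d^2$; the paper's pigeonhole argument only reaches $t/d^2$. Either argument is perfectly adequate here, since the lemma is only used qualitatively in Theorem~\ref{th:2} to show the characteristic flow carries the support off to infinity.
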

\begin{proof}
Since for any $t,$ $\bb(t)\in S_{d-1}\subset \Omega$ and $\Omega$ is a cone, 
it follows that $\int_0^t\bb(\tau)\,d\tau\in \Omega$ and for any $p\in Interior(\Omega),$
$p+\int_0^t\bb(\tau)\,d\tau\in Interior(\Omega).$ Moreover, the distance from $p+\int_0^t\bb(\tau),d\tau$ to $\partial\Omega$ is no less than the distance from $p$ to $\partial\Omega.$ This proves part. a.   Part b. follows from the fact that for any $t>0,$ $\sum_{i=1}^d\bb_i(t){}={}1,$ and so, there is $i_0,$ and there is $\delta_0\subset[0,t],$ such that $\bb_{i_0}(t)\geq 1/d,$ for all $t\in\Delta_0,$ and $|\Delta_0|\geq t/d.$ 
Part $c.$ follows immediately from \eqref{f:formula}. 
\end{proof}

\subsection{Asymptotic behavior in fictitious play}
Consider a model of statistical learning in a large population described by equation \eqref{SL:1}.
An initial boundary-value problem \eqref{IC}, \eqref{BC} with arbitrary $f_0\in C^1_0(\Omega),$ has a global unique solution, as was established in theorem \ref{th:1}. Denote population mean learning prior by
\[
P(t){}={}\int\frac{p}{\sum_i p_i}f(p,t)\,dp,
\]
and by $\hat{f}$ the projection of $f(p,t)$ onto the simplex $S_{d-1}.$ That is,
\[
\hat{f}(\hat{p},t){}={}f(p,t),\quad \hat{p}{}={}\frac{p}{\sum_i p_i}\in S_{d-1}.
\]
 The next theorem shows that if the population averages $P(t)$ and $\bb(t)$ converge to certain values, then these values must be the same and equal to a Nash equilibrium for the matrix game, and the learning priors of every agent in the population converge to that Nash equilibrium.
\begin{theorem}
\label{th:2}
Suppose that $\lim_{t\to\infty} P(t){}={}P_0$ and $\lim_{t\to\infty}\bb(t)=b_0.$ Then,   
\[
b_0=P_0\in \br(P_0),
\]
 and $\forall \epsilon>0,$ $\exists T(\epsilon)$ such that if $t>T(\epsilon),$ then
\begin{equation}
\label{converge}
\supp \hat{f}(\cdot,t)\subset B_\epsilon(P_0)\cap S_{d-1}.
\end{equation}
\end{theorem}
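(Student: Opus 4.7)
The plan is to exploit the very simple structure of equation \eqref{SL:1}: because $\bar{b}(t)$ depends only on $t$, the density is transported rigidly along the straight characteristic $X^t(p)=p+\alpha_p\int_0^t\bar b(\tau)\,d\tau$, so that $f(p,t)=f_0(p-\alpha_p\int_0^t\bar b(\tau)\,d\tau)$ and $\supp f(\cdot,t)$ is just the translate of $\supp f_0$. Since $f_0$ has compact support, everything reduces to studying the behavior of the normalized translate $\hat X^t(p)=X^t(p)/\sum_i X^t(p)_i$ for $p$ in a fixed bounded set.

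First I would use Ces\`aro's theorem applied to the hypothesis $\bar b(t)\to b_0$ to get $t^{-1}\int_0^t\bar b(\tau)\,d\tau\to b_0$. Combined with $\sum_i\bar b_i(\tau)=1$, which forces $\sum_i X^t(p)_i=\sum_ip_i+\alpha_p t$, a direct computation gives
\[
\hat X^t(p)=\frac{p+\alpha_p\int_0^t\bar b(\tau)\,d\tau}{\sum_i p_i+\alpha_p t}\;\longrightarrow\;b_0,
\]
uniformly in $p\in\supp f_0$. This yields the inclusion \eqref{converge} with $b_0$ in place of $P_0$. Then I would compute
\[
P(t)=\int_\Omega\hat p\,f(p,t)\,dp=\int_\Omega\hat X^t(p)\,f_0(p)\,dp\to b_0
\]
by dominated convergence (using Lemma \ref{lemma:support} to keep $\hat X^t(p)$ uniformly bounded), which combined with the assumption $P(t)\to P_0$ forces $P_0=b_0$.

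The remaining step, and the only delicate one, is to identify $P_0$ as a fixed point of the best-response correspondence. For this I would use upper semicontinuity of the set-valued map $\br$ on the simplex: for any $\epsilon'>0$ there is $\delta>0$ such that $\hat p\in B_\delta(P_0)\cap S_{d-1}$ implies $b(\hat p)\in B_{\epsilon'}(\br(P_0))$. By the previous paragraph, for $t$ sufficiently large we have $\hat X^t(p)\in B_\delta(P_0)$ for all $p\in\supp f_0$, hence
\[
\bar b(t)=\int b(\hat X^t(p))f_0(p)\,dp\;\in\;\mathrm{co}\,B_{\epsilon'}(\br(P_0)).
\]
Since $\br(P_0)$ is a convex subset of $S_{d-1}$ (the convex hull of pure best responses to $P_0$), the convex hull of its $\epsilon'$-neighborhood is still contained in its $\epsilon'$-neighborhood. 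Letting $t\to\infty$ and then $\epsilon'\to 0$ yields $b_0\in\br(P_0)$, and together with $P_0=b_0$ this gives $P_0\in\br(P_0)$, i.e.\ a Nash equilibrium.

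The main obstacle I anticipate is the final step: one must combine the set-valued upper semicontinuity of $\br$ with an averaging (convex-combination) argument, and justify that the limit $\bar b(t)\to b_0$ actually lands inside the closed convex set $\br(P_0)$ rather than only in its closure-plus-convex-hull. The convexity of $\br(P_0)$ coming from the game-theoretic description of best responses (outlined in the Appendix, see \S\ref{best:response}) is the ingredient that makes this work cleanly.
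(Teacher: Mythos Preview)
Your proposal is correct and follows essentially the same approach as the paper: both exploit that the support of $f(\cdot,t)$ is a rigid translate of $\supp f_0$ whose center diverges, so that the projected support on $S_{d-1}$ shrinks to a point, and then invoke the upper semicontinuity \eqref{br:cont} of $\br$ to place $b_0$ inside $\br(P_0)$.

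The only notable difference is the order in which the two hypotheses are used. The paper first shows, using only Lemma~\ref{lemma:support} (parts b.\ and c.), that the diameter of $\supp\hat f(\cdot,t)$ tends to zero, and then invokes $P(t)\to P_0$ together with $P(t)\in\mathrm{co}\,\supp\hat f(\cdot,t)$ to pin the limit at $P_0$; the identity $P_0=b_0$ is asserted afterwards ``because of the transport structure.'' You instead use $\bar b(t)\to b_0$ and Ces\`aro to compute directly that $\hat X^t(p)\to b_0$ uniformly on $\supp f_0$, and then deduce $P_0=b_0$ by dominated convergence. Your route makes the step $P_0=b_0$ fully explicit, which the paper leaves terse; the paper's route, on the other hand, shows that \eqref{converge} really only needs the hypothesis $P(t)\to P_0$. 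Either ordering is fine.
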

\begin{proof}
Consider function $\hat{f}(p,t)$ which is defined for $p\in S_{d-1}.$ From the definition 
of $P(t)$ it follows that $P(t)$ belongs to the closed convex hull spanned by $\supp \hat{f}(\cdot,t).$ At time $t=0,$ the support of $f_0$ is separated from the origin, and thus, by properties b. and c. of lemma \ref{lemma:support} (it applies to solutions of \eqref{SL:1} as well),
support of $f(\cdot,t)$ will be contained in a ball of fixed radius and the center diverging to infinity. This means that the diameter of the support of projection $\hat{f}$ decreases to zero. At the same time, since it contains  point $P(t)$ accumulating at $P_0,$ statement \eqref{converge} follows. 

To proof the first statement, notice that for sufficiently small $\epsilon$ and large $t,$
all of mass of $\hat{f}$ is near $P_0$ so that $\bb(t)$ is a convex combination of values of of $\br(p)$  in polytops adjacent to point $P_0,$ and so (see \eqref{br:cont} from Appendinx),  is an element of $\br(P_0).$ 
On the other hand $P_0$ must be equal to $b_0,$ because of the transport structure of the kinetic equation \eqref{SL:1}.
\end{proof}


\section{Examples}
\label{examples}
\subsection{Cultural evolution in Hawk-Dove-Retaliator  game}
\begin{table}
\centering
\begin{tabular}{@{}lccc@{}}
\toprule
    & Hawk           & Dove         & Retaliator \\
Hawk  & -1           &2           &   -1 \\
Dove  & 0            & 1           &  0.9\\
Retaliator  & -1          &  1.1         & 1\\
\bottomrule
\end{tabular}
\vspace{15pt}

\caption{Hawk-Dove-Retaliator game.  \label{HDR}}
\end{table}
Consider a classical Hawk-Dove-Retaliator game, table \ref{HDR}, from evolutionary game theory, see Maynard Smith \cite{Maynardbook} ans Zeeman \cite{Zeem}, table \ref{HDR}. The game  has two ESS: $\frac{1}{2}$Hawk+$\frac{1}{2}$Dove and Retaliator. Depending on the initial distribution of frequencies to  play hawk, dove, or retaliator, the replicator dynamics will proceed to one of ESS's as shown on figure \ref{fig:HDR1}. 

The same strategies are also asymptotically stable points for the best-response dynamics, which describes the statistical learning (fictitious play) in this game. Figure \ref{fig:HDR1}
shows the basins of attraction for each of the strategies.
Notice that basin of attraction for strategy R in replicator equation contains that region for the best-response dynamics.  The mismatch between two dynamics accounts for different scenarios of cultural learning for different pairs of the learning and reproduction rates $(\alpha_p,\alpha).$

For a population consisting of three groups, located in three best-respnse poligyons, figures \ref{fig:HDR2} and \ref{fig:HDR3} show two different scenarios for cultural evolution. The first is reproduction dominated and the other is learning dominated dynamics. Trajectories were obtained by solving \eqref{FK:3} numerically. 

 Notice also that the mean best-response (strategy profile) changes discontinuously when one the subpopulation crosses the boundaries of best response polygons.

With finite number of subpopulations the model reduces to a system of ODEs.
In this particular example the density function 
\[
f(p,t){}={}\sum_{i=1}^3w_i(t)\delta(p-p_i(t)),\quad \sum_{i=1}^3w_i(t)=1.
\]
where functions $p_i$ and $w_i$ are solutions of
\begin{eqnarray*}
\partial_t p_i{}&=&{}\bb(t),\,i=1..3,\\
\partial_t w_i{}&=&{}\alpha w_i\left(\sum_{kl}a_{kl}b_k(p_i(t))\bb_l(t) 
-a_{kl}\bb_k(t)\bb_l(t)
\right),\,i=1..3,
\end{eqnarray*}
and
\[
\bb(t){}={}w_1(t)b(p_1(t)){}+{}w_2(t)b(p_2(t)){}+{}w_3(t)b(p_3(t)).
\]
Notice that all priors $p_i(t)$ change in the direction of the mean best response $\bb(t)$ (when projected to $S_{d-1},$ this means that $p_i(t)$ moves toward $\bb(t)$), and the weights $w_i$ change according the performance of priors $p_i.$ 

Figures \ref{fig:HDR2} and \ref{fig:HDR3} were obtained using the following set of initial data: $w_1(0)=0.3,$ $w_2(0)=0.2,$ $w_3(0)=0.5,$ $p_1(0)=(0.1, 0.8, 0.1),$ $p_2(0)=(0.7, 0.2, 0.1),$ $p_3(0)=(0.05, 0.25, 0.7).$ With such initial data, the mean best response $\bb(0)=(0.3, 0.2, 0.5)$ is located in the basin of attraction of Retaliator according to the replicator equation and in the basin of attraction of $\frac{1}{2}$Hawk+$\frac{1}{2}$Dove for the best response dynamics. The values of $(\alpha,\alpha_p)$ are 
 $(5,1),$ for the example in figure \ref{fig:HDR2}, and $(3,1)$ in figure \ref{fig:HDR3}.

\begin{figure}[htbp]
\centering
\includegraphics[clip, trim=3.5cm 11cm 0.5cm 11.5cm,scale=0.9]{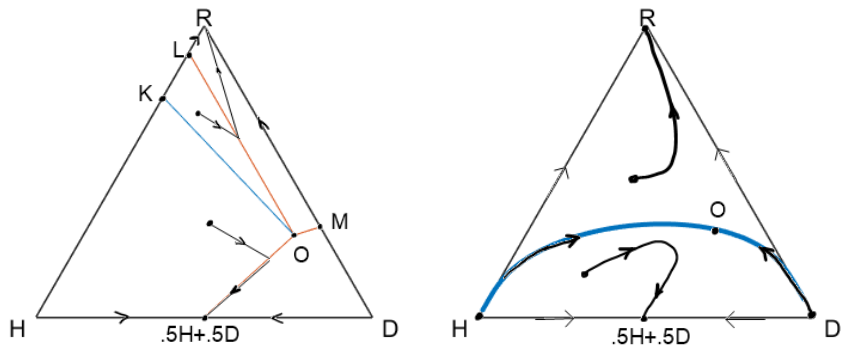}

\caption{Phase portraites for best-response (left) and replicator (right) equations for Hawk-Dove-Retaliator game in table \ref{HDR}.
On the left, three polygonal regions, formed by lines $OL,$ $OM$ and $O(.5H+.5D)$
are the regions where the best response function a single value: $H,$ $D,$ or $R.$  The basin of attraction for $R$ is polygon $KRMO$ (left) and
region above curve $HOD$ (right).  The plots show several trajectories for the best response and the replication equations. 
 \label{fig:HDR1}}
\end{figure}

\begin{figure}[htbp]
\centering
\includegraphics[clip, trim=0.5cm 9cm 0.5cm 9cm, width=1.00\textwidth]{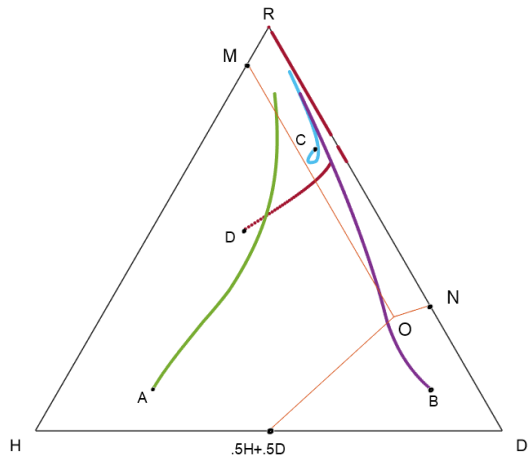}

\caption{Cultural evolution: reproduction dominating case. Three subpopulaitons starting at $A,$ $B,$ and $C,$ move toward Retaliator. A trajectory starting at $D$ is the mean best response (strategy profile). Notice that it changes discontinuously when one of the groups crosses  boundaries of best response polygons.
 \label{fig:HDR2} }
\end{figure}

\begin{figure}[htbp]
\centering
\includegraphics[clip, trim=0.5cm 9cm 0.5cm 9cm, width=1.00\textwidth]{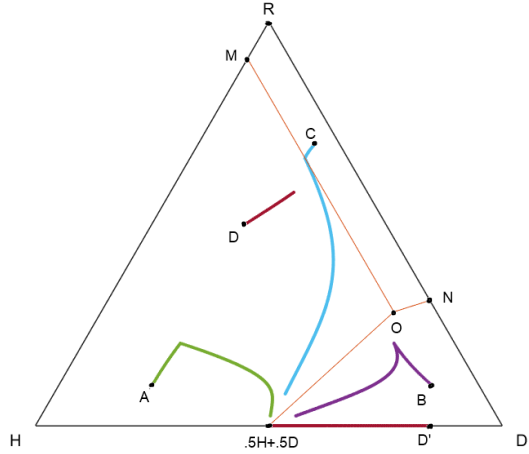}

\caption{Cultural evolution: learning dominating case. Same initial conditions as in Figure \ref{fig:HDR2}. All groups converge to .5H+.5D after the group that started at $C$ moves to the adjacent polygon. Trajectory starting at $D$ is the mean best response (strategy profile). Notice that it changes discontinuously (switches to $D'$) when the top group moves to the adjacent polygon.
 \label{fig:HDR3} }
\end{figure}

\subsection{Effect of growing population}

We consider a situation when the number of interactions among agents is constant and does not change if the population size $N(t)$ increases. That is, the effective learning rate
\[
\alpha_p{}={}\frac{2h}{N(t)\delta}{}={}\frac{N_0}{N(t)}\frac{2h}{N_0\delta}{}={}\frac{\alpha_1}{N(t)},
\]
where $N_0$ is the population size at time $t=0.$
The system of equations is 
\begin{equation}
\label{FK:RPS}
\partial_t f{}+{}\frac{\alpha_1}{N(t)}\div_p (\bb(t)f){}={}\alpha f\left(\sum_{ij}a_{ij}b_i(p)\bb_j(t){}-{}a_{ij}\bb_i(t)\bb_j(t)\right),
\end{equation}
\begin{equation}
\label{POP:RPS}
\partial_tN(t){}={}\alpha N\sum_{ij}\bb_i(t)\bb_j(t),
\end{equation}
with $\bb(t)$ given by \eqref{bb}.
Consider rock-paper-scissors game from table \ref{RPS}. We define the fitness levels (number of offspring) $a_{ij}$ as basis fitness 1 plus the numbers from the table.
The best response function $b(p)$ is sketched in figure \ref{fig:RPS}, to which we refer below.
Initially, the population is split into three groups. The first is 23/32 of all population and every agent in this groups has initially learning prior $p_1(0)=(0.5, 0.4, 0.1).$ It is located in the polygon RNOL for which the best response is to play ``paper''. The second group of proportion 1/4 has prior $p_2(0)=(0.2,0.7,0.1)$ with the best response ``scissors'',
and the third of proportion 1/32 has prior $p_3(0)=(0.32, 0.32, 0.36)$. The mean best response $\bb(0)$ is located on insider region $LONP.$  Suppose that initially there are 10 agents and the values of the parameters $\alpha_1 =1$ and $\alpha = 0.5.$
The distribution function $f$ has the form
\[
f(p,t){}={}\sum_{i=1}^3w_i(t)\delta(p-p_i(t)),\quad \sum_{i=1}^3w_i(t)=1,
\]
and the system \eqref{FK:RPS}, \eqref{POP:RPS} reduces to a system of 5 ODEs for $w_i(t),\,p_i(t),\,N(t),$ $i=1..3$

The dynamics of priors $p_1(t), $ $p_2(t),$ $p_3(t),$ and the mean best response $\bb(t)$ is shown on figure \ref{fig:RPS}, obtained from solving the system of ODEs numerically. In this dynamics,  statistical learning pushes $p_1,$ $p_2,$ $p_3$ toward $\bb(t),$
however, the rate of learning decreases exponentially (we show this below), and as the result $p_1(t),$ and $p_2(t)$ will asymptotically approach some locations in the same polygons where they have started, where as $p_3(t)$ moves to the decision polygon of $p_2(t)$ and also becomes locked there. Then,  the population frequency vector $\bb$ converges to ``scissors'' along line $PS,$ meaning that subpopulations that started in LOMP and MSNO out-evolves the first group.  

The dynamics here is different from that of the replicator equation for which $\bb(t)$ oscillates on a closed trajectory passing through the initial point $\bb(0).$ It differs also, from the dynamics of the best-response equation, that converges to the equilibrium $(1/3,1/3,1/3),$ see Gaunersdorfer and Hofbauer \cite{GH}.

To see that this scenario takes place, notice that   as long as $\bb(t)$ is located on line PS, $\bb_1(t)=0$ and we compute
\[
\partial_t N{}={}\alpha(\bb_2(t)+\bb_2(t))^2N{}={}\alpha N.
\]
Thus, the population grows exponentially, $N(t) = N_0e^{\alpha t}.$ In the state of priors $\mathbb{R}^3_+,$  each group moves to new positions given by formulas 
\[
p_i(t) = p_i(0){}+{}\frac{\alpha_1}{N_0}\int_0^t e^{-\alpha t}\bb(\tau)\,d\tau.
\]
Note, that figure \ref{fig:RPS} shows projections of this $p_i(t)$ onto $S_2.$
Clearly, $p_1(t)$ and $p_2(t)$ move a finite distance away from their initial position, and the parameters $\alpha, \alpha_1, N_0$ can be selected (as in this example) in such a way that $p_1(t)$ $p_2(t)$  remain in the polygon where it has started.  Moreover, a small fraction of population $w_3$ can be placed initially into region $OMSN,$ close to line $ON,$
so that it crosses that line, forcing $\bb(t)$ to move to line $PS.$ 
\begin{table}
\centering
\begin{tabular}{@{}lccc@{}}
\toprule
    &  Rock         & Paper           & Scissors  \\
Rock  & 0            & -1          &   1 \\
Paper  & 1            & 0           &  -1\\
Scissors  & -1          &  1           & 0\\
\bottomrule
\end{tabular}
\vspace{15pt}

\caption{Rock-Paper-Scissors game.  \label{RPS}}
\end{table}

\begin{figure}[htbp]
\centering
\includegraphics[clip, trim=0.5cm 9cm 0.5cm 9cm, width=1.00\textwidth]{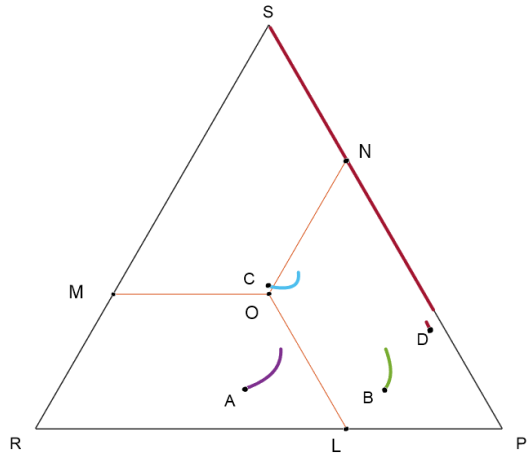}

\caption{Cultural evolution with constant interaction frequency. The plot shows three polygonal regions where the best response function takes a single value: $R,$ $P,$ or $S.$ There are three subpopulations located at points $A,B,$ and $C,$ respectively. The population starting at $C$ moves to the adjacent polygon and stays there for all subsequent times. Subpopulations started at $A$ and $B$ do not leave their polygons. A discontinuous trajectory starting in at $D$ is the mean best-response (strategy profile). Asymptotically it moves to $S,$ meaning that the subpopulations contained in the polygon $LONP,$  out-evolves the population from the adjacent polygon $LRMO.$
 \label{fig:RPS}}
\end{figure}

\end{section}

\begin{section}{Appendix}

\subsection{Best response function $\br(p).$} 
\label{best:response}
Let $S_{d-1}$ be the $d-1$ dimensional simplex $\left\{ p\in\mathbb{R}^d_+\,:\, \sum_i p_i{}={}1\right\}.$ Let $A{}={}\{a_{ij}\}$ represents payoff matrix in a symmetric game. We will assume that for no two indexes $i\not=j,$
\begin{equation}
\label{Game:hyp}
\sum_k a_{ik}p_k = \sum_k a_{jk}p_k,\quad \forall\, p\in S_{d-1}.
\end{equation} 
Denote by $r_i(p) = \sum_k a_{ik}p_k,$ the payoff to strategy $i$ played against mixed strategy $p,$  and a set
\[
\mathcal{I}(p) = \left\{ i_0(p)\in1..d\,:\,r_{i_0}(p) = \max_i r_i(p)\right\}.
\]
Denote the coordinate vectors $e_i{}={}(0,..0,1,0..0),$ with $1$ in $i^{th}$ position, and a multi-valued function
\begin{equation}
\label{BRF}
\br(p) {}={}\left\{\mbox{\rm convex hull of all $e_{i_0(p)},$ such that $i_0(p)\in \mathcal{I}(p)$}\right\}.
\end{equation}
Under hypothesis \eqref{Game:hyp}, $S_{d-1}$ is a union of finite number of polytops  such that $\br(p)$ is single-valued in the interior of each polytop $P_k,$ and at any point $p$ on the boundary of $P_k,$ the best response $\br(p)$ contains the value $\br(p_1)$ from the interior of $P_k:$
\[
\br(p_1)\in \br(p),\quad \forall p_1\in Interior(P_k),\, p\in \partial P_k.
\]
This condition can be re-phrased in an equivalent way, as a continuity condition:
for any $p\in S_{d-1},$ there is $\epsilon>0,$ such that 
for any $\epsilon_1<\epsilon,$ and any point $p_1\in B_{\epsilon_1}(p)\cap S_{d-1},$
\begin{equation}
\label{br:cont}
\br(p_1)\subseteq \br(p).
\end{equation}
Finally, we select a single-valued representative 
of  $b(p)$ from the values of $\br(p).$ If $p\in \mathbb{R}^d_+,$ then $b(p)$ is one of the values of $\br(p/\sum_i p_i).$ The selection can be, for example, the barycenter of the set of values of $\br(p),$ which corresponds to the situation when agents are choosing one strategy at random (from an uniform distribution). 
\end{section}

\bibliography{references_games}{}
\bibliographystyle{plain}

\end{document}